\documentclass[journal,onecolumn]{IEEEtran}

\usepackage{amssymb,mathtools}
\usepackage{array}
\usepackage{xcolor}
\usepackage{amsmath,amssymb,amsthm,mathtools}
\usepackage{cite}
\usepackage{array}
\usepackage{url}
\usepackage{xcolor}

\newtheorem{theorem}{Theorem}
\newtheorem{lemma}{Lemma}
\newtheorem*{conjecture}{Conjecture}
\newtheorem{proposition}{Proposition}

\newtheorem{corollary}{Corollary}

\newcommand{\C}{\mathcal{C}}
\newcommand{\F}{\mathbb{F}}
\newcommand{\wt}{\mathrm{wt}}
\newcommand{\wtq}{\mathrm{wt}_q}
\newcommand{\cl}{\operatorname{cl}}
\newcommand{\GRM}{\mathrm{GRM}}
\newcommand{\PGRM}{\mathrm{PGRM}}

\begin{document}

\title{Counterexamples to Charpin's Conjecture on BCH codes}
\author{Run Zheng, Yaoran Yang, Yutong Zhang, Haode Yan, and Maosheng Xiong%
\thanks{Run Zheng and Maosheng Xiong are with the Department of Mathematics,
The Hong Kong University of Science and Technology, Hong Kong, China
(e-mail: zhengrung@ust.hk; mamsxiong@ust.hk).}%
\thanks{Yaoran Yang and Yutong Zhang are with the School of Mathematics,
Sichuan University, Chengdu 610065, China
(e-mail: yangyaoran@stu.scu.edu.cn; yutongzhang@stu.scu.edu.cn).}
\thanks{Haode Yan is with the School of Science, Harbin Institute of Technology,
Shenzhen 518055, China (Corresponding author)
(e-mail: yanhd@hit.edu.cn).} }

\maketitle

    \begin{abstract}
We construct an infinite family of $q$-ary primitive narrow-sense BCH codes whose minimum distance strictly exceeds the Bose distance; in fact, the gap between the two can be arbitrarily large as the length of the code tends to infinity. The key idea is to embed these BCH codes in a suitably large punctured generalized Reed--Muller code, whose codeword weights obey divisibility conditions supplied by Ax's theorem. This divisibility forces the minimum distance of the BCH codes far above the Bose distance.
In particular, our family disproves a longstanding conjecture of Charpin asserting that this difference is at most four.
\end{abstract}

\begin{IEEEkeywords}
minimum distance, BCH codes, linear codes, cyclic codes, Bose distance
\end{IEEEkeywords}

\section{Introduction}
Bose-Chaudhuri-Hocquenghem (BCH) codes were introduced
by Hocquenghem \cite{Hocquenghem1959} and independently by Bose and Ray-Chaudhuri \cite{BoseRayChaudhuri1960} more than six decades ago, and they remain one of the most important families of cyclic codes. Their simple algebraic structure enables efficient encoding and decoding while providing strong error-correcting capabilities, and this combination has made them ubiquitous in practice, from communication systems to data storage devices and consumer electronics. 

Despite their long history, practical importance, and simple algebraic structure, BCH codes are far from fully understood. In particular, their minimum distances are still not known explicitly in general. The
difficulty of this problem was already singled out in the classical text of MacWilliams and Sloane \cite[Ch.~9]{MacWilliamsSloane1977}, and it has been
emphasized again in the surveys of Charpin \cite{Charpin1998} and of Ding and
Li \cite{DingLi2024}. Even for primitive narrow-sense BCH codes, the best known and most studied
subfamily, the minimum distance has been determined only when the parameters lie in certain ranges, and it remains unknown outside them.

Throughout this paper, 
let $q$ be a prime power and $m\geq 1$ an integer. Let $\beta$ be a primitive element of the finite field $\F_{q^m}$ of order $q^m$. For each positive integer $i$, denote by $m_{i}(x)$ the minimal polynomial of $\beta^i$ over $\F_q$. The $q$-ary \emph{primitive narrow-sense}
BCH code of length $n=q^m-1$ with \emph{designed distance} $\delta $ where $2\leq \delta \le n$, denoted by 
$\mathcal{C}_{(q,m,\delta)}$, is the cyclic code of length $n$ with the generator polynomial $g_{(q,m,\delta)}(x)$ given by 
\begin{equation*}
    g_{(q,m,\delta)}(x)=\mathrm{lcm}\left(m_{1}(x), m_{2}(x),\ldots,m_{\delta-1}(x)\right), 
    \end{equation*}
where  $\mathrm{lcm}$ denotes the least common multiple of the polynomials.  

By the BCH bound, $d\bigl(\mathcal{C}_{(q,m,\delta)}\bigr)\geq \delta$. The first subtlety, however, is that different designed distances may define the same code; that is, it may happen that
$
  \mathcal{C}_{(q,m,\delta)}
  =\mathcal{C}_{(q,m,\delta')},
  $
where $\delta' > \delta$. The largest such $\delta'$ that defines $\mathcal{C}_{(q,m,\delta)}$ is called the \emph{Bose distance} of $\mathcal{C}_{(q,m,\delta)}$, denoted by $d_B\bigl(\mathcal{C}_{(q,m,\delta)}\bigr)$. So we have
\begin{equation*}
  d\bigl(\mathcal{C}_{(q,m,\delta)}\bigr)
  \geq d_B\bigl(\mathcal{C}_{(q,m,\delta)}\bigr)
  \geq \delta.
\end{equation*}
When the code is clear from the context, we simply write $d$ and $d_B$ for
its minimum distance and Bose distance, respectively.

For fixed $q$, $m$ and $\delta$, determining $d_B$ explicitly is already a
nontrivial problem, and it has attracted considerable attention in its own
right \cite{D2015,DFZ2017,YF2000,YH1996,zheng2025,zheng2026}. Understanding the minimum
distance itself therefore amounts to understanding the gap $d-d_B$, and a
substantial body of evidence suggests that for primitive narrow-sense BCH
codes this gap vanishes more often than not. Peterson \cite{Peterson1969}
proved that a primitive narrow-sense BCH code with Bose distance $q^t-1$ has
minimum distance exactly $q^t-1$ for $1\le t\le m-1$. Berlekamp
\cite{Berlekamp1970} and Kasami and Lin \cite{KasamiLin1972} exhibited further
families satisfying $d=d_B$, Augot and Sendrier \cite{AugotSendrier1994} added
more in the 1990s, and this line of work has continued in recent years
\cite{ChenEtAlBCH2026,ChenEtAlGoppaBCH2026,Li2017,NoguchiEtAl2021,
ShanyBerman2025,TiwariKewat2026}. We also mention the general result of Cohen \cite[Theorem 1.2]{MR1431812} showing that for binary primitive narrow-sense BCH codes $\mathcal{C}_{(2,m,2t+1)}$, one has $d=d_B=2t+1$ whenever $m >\log_2\left((2t+1)! \cdot (2t-1)\right)$. By contrast, primitive narrow-sense BCH
codes with $d>d_B$ are much harder to find: Kasami and Tokura
\cite{KT1969} exhibited binary examples satisfying $d\ge d_B+2$, but to the
best of our knowledge all previously known infinite family whose exact
minimum distance was determined satisfies $d=d_B$.

Extensive numerical
computations for binary BCH codes of lengths up to $511$
\cite{AugotCharpinSendrier1992,CanteautChabaud1998} show that the gap between $d$ and $d_B$ is small. Charpin wrote that ``When the codes are binary and primitive, it is usually conjectured that the true minimum distance $d$ does not exceed $\delta+4$, $\delta$ being the designed (Bose)
distance.'' (see \cite[page 60]{Charpin1998}).
A more general version, asserting the same bound for every $q$-ary primitive narrow-sense BCH code $\mathcal{C}_{(q,m,\delta)}$, was subsequently formulated in
\cite{D2015,DFZ2017,Li2017} as \emph{Charpin's conjecture} and is stated below.
\begin{conjecture}
Let $q$ be a prime power. 
For any integers  $m$ and $\delta$ with 
$m\geq 2$ and $2\leq\delta\leq q^m-1$, we have
\begin{equation}\notag
d\bigl(\mathcal{C}_{(q,m,\delta)}\bigr)
-d_B\bigl(\mathcal{C}_{(q,m,\delta)}\bigr)
\leq 4.
\end{equation}
\end{conjecture}

In this paper, we disprove this conjecture in a strong sense. Our main result is the following.
\begin{theorem}\label{mainth}
Let $q$ be a prime power, and let  $m$ be an integer with $m\ge 13$.  Set $u=\lfloor m/4\rfloor$ and
$t=\lfloor (m-1)/3\rfloor$. Let $s$ be an integer with $u\le s<t$, and let
$\delta=q^m-q^{m-1}-q^{m-1-u}-q^{s}-1$. Then
\begin{equation} \label{1:gapb}
d_B(\mathcal{C}_{(q,m,\delta)})=\delta\quad \hbox{ and }\quad 
  d(\mathcal{C}_{(q,m,\delta)}) \geq \delta+q^{s}, 
\end{equation}
with equality in the latter holding when $q = 2$.
\end{theorem}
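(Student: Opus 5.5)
The plan has three parts, following the strategy announced in the abstract. First, we show that $\delta$ is its own Bose distance, i.e.\ $\delta$ is a coset leader. Second, we embed $\mathcal{C}_{(q,m,\delta)}$ in a punctured generalized Reed--Muller code and use Ax's divisibility theorem to force the weights up to $\delta+q^s$. Third, for $q=2$, we exhibit a codeword of exactly that weight.

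\textbf{Step 1: the Bose distance.} We use the standard fact that $d_B(\mathcal{C}_{(q,m,\delta)})=\delta$ if and only if $\delta$ is the smallest element of its $q$-cyclotomic coset modulo $n=q^m-1$. Write $\delta = n - (q^{m-1}+q^{m-1-u}+q^s)$. Then the base-$q$ expansion of $\delta$ has all digits equal to $q-1$, except that the digits at positions $m-1$, $m-1-u$ and $s$ are $q-2$. Multiplying by $q^j$ cyclically rotates these digits, so it suffices to check that no rotation of the pattern of three ``$q-2$'' digits gives a smaller integer. The integer $\delta$ is smallest when the top digit is $q-2$; among rotations with this property, we compare the next lowered digit. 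A rotation putting a $q-2$ at position $m-1$ places the other two $q-2$'s at positions that are cyclic shifts of $\{m-1-u,\,s\}$. The cyclic gaps between the three lowered positions are $u$, $m-1-u-s$ and $s+1$. Since $u\le s<t\le (m-1)/3$, we have $u<s+1$, and also $u\le m-1-u-s$ because $m-1-u-s\ge m-1-2t\ge t\ge u$. So the gap $u$ is the smallest, and the rotation realizing $\delta$ is lexicographically minimal. Ties, which can occur when gaps are equal (for instance $u=s+1$ is impossible, but $u=m-1-u-s$ is possible), are handled by comparing the third position.

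\textbf{Step 2: the lower bound via Ax.} The defining zeros of $\mathcal{C}=\mathcal{C}_{(q,m,\delta)}$ include $\beta^i$ for $1\le i\le\delta-1$. By the standard description of punctured GRM codes, $\mathcal{C}\subseteq \PGRM_q(r,m)^{\perp}$-type codes. More precisely, $\mathcal{C}$ is contained in the cyclic code whose nonzeros are the $\beta^{-j}$ with $\wtq(j)$ large. Then every codeword is an evaluation on $\F_{q^m}^*$ of a polynomial in $m$ variables of degree at most roughly $\wtq(n-\delta')$ for each nonzero exponent. One shows that the exponents $j$ which are not zeros of $\mathcal{C}$ all satisfy $\wtq(n-j)\le$ some bound $r$. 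Here we use that every $j\ge\delta$ with $j$ not in the coset of a smaller element has few digits below $q-1$. This gives $\mathcal{C}\subseteq \PGRM_q(r,m)$ with $r=m(q-1)-\text{(small)}$, or rather the dual inclusion, arranged so that its complement has small degree. Ax's theorem then says that the weights of the extended codeword are divisible by $q^{\lceil m/r'\rceil-1}$ for the appropriate degree $r'$. Equivalently, $w\equiv -1$ or $0 \pmod{q^{s}}$ after puncturing. Choosing the GRM order so that the divisibility exponent is at least $s$ needs $\lceil m/r'\rceil -1\ge s$. Since $s<t=\lfloor (m-1)/3\rfloor$, it suffices to have $r'\le 3$, which is the point where the three lowered digits of $\delta$ enter. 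Combined with $d\ge\delta=q^m-q^{m-1}-q^{m-1-u}-q^s-1$, the congruence $d+1\equiv 0\pmod{q^{s}}$ (or the analogous one) pushes $d$ past the next multiple, giving $d\ge \delta+q^s$. The main obstacle is this step: pinning down precisely which PGRM code (order $r$) contains $\mathcal{C}$. That requires showing every coset leader $\ge\delta$ has $q$-weight complement at most three lowered units, and checking that the resulting Ax exponent is at least $s$ rather than $u$.

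\textbf{Step 3: equality for $q=2$.} We would exhibit an explicit codeword of weight $\delta+2^s=2^{m-1}-2^{m-1-u}-1$, which is the weight of a punctured codeword of $\mathrm{RM}(3,m)$-type. The natural candidate is the evaluation of a product or sum of low-degree Boolean functions, such as $x_1(1+\text{quadratic form of rank }2u)$ adjusted by an affine term. Its Walsh spectrum is known, so the weight is $2^{m-1}-2^{m-1-u}$ on $\F_2^m$, and $2^{m-1}-2^{m-1-u}-1$ after puncturing at $0$. Verifying membership in $\mathcal{C}$ means checking that its trace representation only uses exponents $j$ with $n-j$ of $2$-weight compatible with the coset leaders below $\delta$; we would use $\mathrm{Tr}(x^{1+2^{u}})$-type Gold terms times a linear factor.
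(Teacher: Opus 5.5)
\textbf{Step 1 is fine.} It is essentially the paper's argument. In fact $u$ is strictly the smallest cyclic gap, since $m-1-u-s>m-1-2t\ge t>u$, so no ties arise.

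\textbf{The genuine gap is Step 2.} The whole lower bound rests on the inclusion $\mathcal{C}_{(q,m,\delta)}\subseteq\PGRM_q(3,m)$. Equivalently, every $a\in\{1,\dots,n-1\}$ with $\wtq(a)<m(q-1)-3$ must satisfy $\cl(a)<\delta$. You leave exactly this as ``the main obstacle''. It is where $u=\lfloor m/4\rfloor$ and $s<t$ are really used, and it needs an actual argument, for example:
\begin{enumerate}
\item If some digit of $a$ is at most $q-3$, rotate it to the front.
\item If two digits $q-2$ are cyclically adjacent, rotate them to the front.
\item Otherwise there are $k\ge 4$ isolated digits $q-2$. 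The shortest run $g_1$ of $(q-1)$'s between consecutive ones satisfies $g_1\le\lfloor (m-4)/4\rfloor=u-1$. In the boundary case $g_1=u-1$, the following run satisfies $g_2\le m-1-(k-1)u\le m-1-3u<b=m-u-s-2$. The last inequality is $s\le 2u-2$, which follows from $s<t\le 2u-1$.
\end{enumerate}
In each case some rotation of $[a]_q$ lies strictly below $[\delta]_q$.

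\textbf{Your congruence bookkeeping is wrong as stated.} We have $\delta+1=q^m-q^{m-1}-q^{m-1-u}-q^s\equiv 0\pmod{q^s}$. So the condition $w\equiv 0$ or $-1\pmod{q^s}$ is already met by $w=\delta$ and excludes nothing. You need the full Ax exponent $\lceil m/3\rceil-1=t\ge s+1$. Since $t\le m-1-u$, one has $\delta\equiv q^t-q^s-1\pmod{q^t}$. Hence the weights $\delta,\dots,\delta+q^s-1$ have residues in $[q^t-q^s-1,\,q^t-2]$, none of which is $0$ or $q^t-1$.

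\textbf{Step 3 is only a heuristic.} No codeword is shown to lie in $\mathcal{C}_{(2,m,\delta)}$. Moreover, the suggested $x_1(1+Q)$ itself cannot have weight $2^{m-1}-2^{m-1-u}$ once $u\ge 3$: that would force $Q$ to equal $1$ at exactly $2^{m-1-u}<2^{m-3}$ points of the hyperplane $x_1=1$. That is impossible for a nonzero quadratic function in $m-1$ variables. The paper needs no construction. Since $\delta+2^s=2^{m-1}-2^{m-1-u}-1$, one has $\mathcal{C}_{(2,m,\delta+2^s)}\subseteq\mathcal{C}_{(2,m,\delta)}$. Kasami and Lin showed that this subcode has minimum distance exactly $\delta+2^s$.
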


Theorem \ref{mainth} states that the primitive narrow-sense BCH codes in this family satisfy
\begin{equation*} 
    d-d_B\geq q^{\lfloor m/4 \rfloor},
\end{equation*}
that is, the gap between the true minimum distance and the Bose distance of a primitive narrow-sense $q$-ary BCH code can be arbitrarily large, contrary to Charpin's conjecture that the gap is bounded by four\footnote{While writing this paper, we realize that, for $q=2$, $m=13$, $u=s=3$, $t=4$, and $\delta=3575$, the code $\mathcal{C}_{(q,m,\delta)}$ is exactly the code appearing in the last line of \cite[Table I]{KT1969}, where it was shown that $d-d_B\geq 8$. This reference seems to be overlooked by the coding theory community and by us until now.}. For the sake of completeness, we also note that the code $\mathcal{C}_{(q,m,\delta)}$ in Theorem \ref{mainth} has dimension 
\begin{equation}\label{dimeq}
    \mathrm{dim}(\mathcal{C}_{(q,m,\delta)})\!=\!1+ m+\frac{m(m-2u+1)}{2}
+\frac{m}{3}
\left(
\binom{m-3u-1}{2}+3(s-u+1)
\right).
\end{equation}

It is worth noting that even the smallest codes in this family have large
lengths and designed distances; see Table~\ref{table1}. Exact
minimum-distance computations for codes with such parameters are beyond
the range of routine direct computation. This may explain why this family has   escaped the attention of the coding-theory
community.

\begin{table}[htbp]
\centering
\caption{Parameters of some binary primitive narrow-sense BCH codes in
Theorem~\ref{mainth}.}
\label{table1}
\begin{tabular}{cccccc}
\hline
$m$ & $n=2^m-1$ & $\dim$ & $d_B$ & $d$ & $d-d_B$ \\
\hline
13 & 8191   & 92  & 3575  & 3583  & 8  \\
14 & 16383  & 120 & 7159  & 7167  & 8  \\
15 & 32767  & 156 & 14327 & 14335 & 8  \\
16 & 65535  & 121 & 30703 & 30719 & 16 \\
17 & 131071 & 154 & 61423 & 61439 & 16 \\
\hline
\end{tabular}
\end{table}

The idea of the proof of Theorem \ref{mainth} is very simple. First, it is straightforward to verify that the Bose distance of $\mathcal{C}_{(q,m,\delta)}$ is exactly $\delta$ by demonstrating that $\delta$ is a coset leader, so $d \ge \delta$. The code dimension is also easily obtained. Second, this is the technical part: we embed $\mathcal{C}_{(q,m,\delta)}$ inside a suitable punctured generalized Reed--Muller code $\PGRM_q(\ell,m)$, for which the weight of each codeword satisfies strong divisibility conditions (see Corollary \ref{lemma2}) by applying Ax's theorem. Then this divisibility naturally forces $d$ far away from $d_B=\delta$, to the effect that $d \ge \delta+q^s$. We also note that this idea was already present in \cite{KT1969}, though here we apply it in a much more sophisticated way in order to construct an infinite family of such BCH codes satisfying \eqref{1:gapb}.

The remainder of this paper is organized as follows.
Section~\ref{preliminaries} introduces the notation and collects the
preliminary results on primitive narrow-sense BCH codes,
$q$-cyclotomic cosets, and punctured generalized Reed--Muller codes
needed in the sequel. Section \ref{sec:auxiliary} establishes several auxiliary lemmas, which
are then applied in Section~\ref{mainresult} to prove the main theorem.
Finally, Section~\ref{conclusion} concludes the paper with some remarks
and open questions. The computation of the dimension of the code $\mathcal{C}_{(q,m,\delta)}$ is left to Section~\ref{append} {\bf Appendix}. 

\section{Preliminaries}\label{preliminaries}

\subsection{Cyclic codes}
Let  $\mathbb{F}_q^n$ denote the $n$-dimensional vector space over $\mathbb{F}_q$.
A $q$-ary linear code is a linear subspace of $\mathbb{F}_q^n$. In particular, a linear code $\mathcal{C} \subseteq \mathbb{F}_q^n$ is said to be \textit{cyclic} if $(c_0, c_1, \dots, c_{n-1}) \in \mathcal{C}$ implies that its cyclic shift $(c_{n-1}, c_0, \dots, c_{n-2}) \in \mathcal{C}$. By identifying each vector $(c_0, c_1, \dots, c_{n-1}) \in \mathbb{F}_q^n$ with its polynomial representation
   \begin{equation*}
   c_0 + c_1x + \dots + c_{n-1}x^{n-1} \in \mathbb{F}_q[x]/(x^n - 1),
   \end{equation*}
such a cyclic code can be identified with an ideal of the quotient ring $\mathbb{F}_q[x]/(x^n - 1)$. Since $\mathbb{F}_q[x]/(x^n - 1)$ is a principal ideal ring, every cyclic code $\mathcal{C}$ is generated by a unique monic divisor $g(x)$ of $x^n-1$, denoted by $\mathcal{C} = \langle g(x) \rangle$. The polynomial $g(x)$ is called the \textit{generator polynomial} of $\mathcal{C}$. 

For the case that is most interesting to us, let us assume that $n=q^m-1$. Recall that $\beta$ is a primitive element of $\mathbb{F}_{q^m}$. Thus, every root of $g(x)$ is of the form $\beta^i$. The set 
\begin{equation}\notag
 T(\mathcal{C})=  \{i: 0\leq i\leq n-1, g(\beta^i)=0\} 
\end{equation}
is called the \textit{defining set} of $\mathcal{C}$ with respect to $\beta$. 
It is well known that the dimension of $\C$ is given by
\begin{equation}\label{dim}
\dim(\C)
=
\left|\{0,1,\ldots,n-1\}\setminus T(\C)\right|,
\end{equation}
where $|\cdot|$ denotes the cardinality of a set.
Moreover, for two cyclic codes $\C$ and $\C'$, it is
straightforward to verify that
\begin{equation}\label{pre2}
\mathcal{C}\subseteq \mathcal{C}^{\prime} \quad \text{if and only if} \quad T(\mathcal{C}^{\prime}) \subseteq T(\mathcal{C}).
\end{equation}

\subsection{Cyclotomic cosets}
For two   integers $a$ and $b$ with $b\geq 1$,  we denote by  $a\bmod b$   the least nonnegative integer congruent to $a$ modulo $b$. Equivalently,
\begin{equation*}
    a\bmod b = a  - b  \left\lfloor \frac{a}{b} \right\rfloor.
\end{equation*}
For each integer $a$ with $0\leq a\leq n-1$, the \emph{$q$-cyclotomic coset} of $a$ modulo $n$ is defined by
\begin{equation*}
    C_a=\{aq^k\bmod n:k= 0,1,2,\ldots,m-1\}.
\end{equation*}
The smallest integer in $C_a$   is called the  \emph{coset leader} of $C_a$, denoted by $\cl(a)$. 
It is well known that 
   the minimal polynomial $m_a(x)$ of $\beta^a$ over $\mathbb{F}_q$ can be expressed as 
\begin{equation}\notag
    m_a(x)=\prod\limits_{i\in C_a}(x-\beta^i).
\end{equation} 
As a result, the defining set of 
$\mathcal{C}_{(q,m,\delta)}$ with respect to $\beta$ is precisely 
\begin{equation}\label{pre1}
  T(\mathcal{C}_{(q,m,\delta)}) = \bigcup_{a=1}^{\delta-1} C_a.
\end{equation}
Observe  that  for any  integer  $\delta$  satisfying $2\leq \delta\leq  n-1$, 
$
\bigcup\limits_{a=1}^{\delta-1}C_a \neq  \bigcup\limits_{a=1}^{\delta}C_a$
 if and only if 
$\delta$ is a coset leader.
  Consequently,   we have   
\begin{equation}\label{db}
d_B\left(\mathcal{C}_{(q,m,\delta)}\right)=\delta \quad \hbox{ if and only if } \quad \delta=\cl(\delta). 
\end{equation}

To facilitate our subsequent discussion, we introduce some further notation.
Let $\mathcal{D}_q=\{0,1,\ldots,q-1\}$  denote the set of all nonnegative integers less than $q$. 
 Note that 
each  integer $a$ with $0\leq a\leq n-1$ can be uniquely represented by its \textit{$q$-adic expansion} as  $a=\sum_{\ell=0}^{m-1}a_\ell q^\ell$, where  $a_\ell\in \mathcal{D}_q  
$ for $0\leq \ell\leq m-1$. We define the 
 \emph{$q$-adic vector} of $a$ by 
\begin{equation*}
    [a]_q=(a_{m-1},a_{m-2},\ldots,a_1,a_0),
\end{equation*}
and define the \textit{$q$-weight} of $a$ by 
\begin{equation}\notag
\wtq(a)=\sum_{i=0}^{m-1}a_i. 
\end{equation}

Let $
\mathcal{D}_q^m = \{(a_{m-1}, \ldots, a_0) : a_i \in \mathcal{D}_q \}
$ 
denote the set of all length-$m$ vectors over $\mathcal{D}_q$. We equip $\mathcal{D}_q^m$ with the lexicographic order. More precisely, for two distinct vectors
$
\mathbf{u}=(u_{m-1},\ldots,u_0) $ and $ \mathbf{w}=(w_{m-1},\ldots,w_0)$
in $\mathcal{D}_q^m$, let $j$ be the largest integer  such that $u_j \neq w_j$. We write $\mathbf{u}<\mathbf{w}$ if $u_j<w_j$, and $\mathbf{u}\leq\mathbf{w}$ if either $\mathbf{u}<\mathbf{w}$ or $\mathbf{u}=\mathbf{w}$.
 Let $I_{m-1}$ be the identity matrix of order $m-1$, and let 
\begin{equation*}
   Q = \begin{bmatrix}
        \mathbf{0} & 1 \\
        I_{m-1} & \mathbf{0}
    \end{bmatrix}
\end{equation*}
be an $m\times m$ matrix.
Then for any vector $(a_{m-1},a_{m-2},\ldots,a_0) \in \mathcal{D}_q^m$, we have
\begin{equation*}
    (a_{m-1},a_{m-2},\ldots,a_0)Q = (a_{m-2},\ldots,a_{0},a_{m-1}).   
\end{equation*}
The following proposition records several immediate consequences of the
definitions.
\begin{proposition}\label{properties}
Let $a$ and $b$ be two integers with $0\leq a,b\leq n-1$. Then the following
statements hold.
\begin{enumerate}
    \item  $a=b$ if and only if $[a]_q=[b]_q$, and $a<b$ if and
    only if $[a]_q<[b]_q$.
    \item 
    $
    [a q^k\bmod n]_q=[a]_qQ^k
$  for all $0\leq k\leq m-1$.
    \item  $a=\cl(a)$  if and only if
    $
    [a]_qQ^k\geq  [a]_q
   $ 
    for all $1\leq k\leq m-1$.
\end{enumerate}
\end{proposition}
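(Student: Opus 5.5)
The statement to prove is Proposition~\ref{properties}, which follows directly from the definitions.

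\textbf{Part 1.} The map $a\mapsto [a]_q$ is a bijection from $\{0,\ldots,n-1\}$ onto $\mathcal{D}_q^m$, by uniqueness of the $q$-adic expansion, so $a=b$ if and only if $[a]_q=[b]_q$. For the order, suppose $[a]_q\neq[b]_q$ and let $j$ be the largest index with $a_j\neq b_j$, say $a_j<b_j$. Then
\[
b-a=(b_j-a_j)q^j+\sum_{i<j}(b_i-a_i)q^i \ge q^j-\sum_{i<j}(q-1)q^i = q^j-(q^j-1)=1>0 .
\]
So $[a]_q<[b]_q$ implies $a<b$. The converse follows because both orders are total and agree on every distinct pair.

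\textbf{Part 2.} I will argue by induction on $k$, so it suffices to treat $k=1$. Write $a=a_{m-1}q^{m-1}+a'$ with $a'=\sum_{i<m-1}a_iq^i$. Then
\[
aq = a_{m-1}q^m + a'q = a_{m-1}(q^m-1) + \bigl(a'q + a_{m-1}\bigr).
\]
Hence $aq\equiv a'q+a_{m-1}\pmod n$. The number $a'q+a_{m-1}=\sum_{i<m-1}a_iq^{i+1}+a_{m-1}$ has $q$-adic vector $(a_{m-2},\ldots,a_0,a_{m-1})=[a]_qQ$.

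It remains to check that this number lies in $[0,n-1]$, so that it equals $aq\bmod n$ exactly. It is at most $n$, with equality only when all digits equal $q-1$, that is, only when $a=n$. Since $a\le n-1$, the value $n$ is excluded. For the induction step, apply the $k=1$ case to $aq^{k-1}\bmod n$, which again lies in $[0,n-1]$, and use $(aq^{k-1}\bmod n)\,q\equiv aq^k\pmod n$.

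\textbf{Part 3.} By definition $C_a=\{aq^k\bmod n : 0\le k\le m-1\}$, and $a$ is the coset leader exactly when $a\le aq^k\bmod n$ for all $1\le k\le m-1$. By part 2, $[aq^k\bmod n]_q=[a]_qQ^k$. By part 1, the numerical comparison $a\le aq^k\bmod n$ is equivalent to the lexicographic comparison $[a]_q\le[a]_qQ^k$. This gives the claim.

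The only step needing care is the boundary check in part 2 that $a'q+a_{m-1}\neq n$; everything else is bookkeeping.
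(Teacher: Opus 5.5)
Your proof is correct and is essentially the routine verification the paper has in mind: the paper states Proposition~\ref{properties} without proof, as an immediate consequence of the definitions, and you supply the standard checks (the digit-comparison inequality, the identity $aq=a_{m-1}(q^m-1)+(a'q+a_{m-1})$ with the boundary check, and reduction of Part 3 to Parts 1 and 2). One slip: $a\mapsto[a]_q$ is not a bijection onto $\mathcal{D}_q^m$, because $n=q^m-1$ and the all-$(q-1)$ vector encodes $n$ itself, as you note in Part 2. Its image is $\mathcal{D}_q^m$ minus that vector, but you only use injectivity, so nothing breaks.
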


\subsection{Punctured generalized Reed--Muller codes}
For a polynomial
$f\in\mathbb{F}_q[x_1,\ldots,x_m]$, let $\deg_{x_i}(f)$ denote the degree of $f$ with respect to $x_i$, and
let $\deg(f)$ denote its total degree.
The polynomial $f$ is called \emph{reduced} if
$
\deg_{x_i}(f)\leq q-1$
for all $1\leq i\leq  m$.
Let $\ell$ be an integer with $1\leq\ell<m(q-1)$. Denote by
$\mathcal{P}_q(\ell,m)$ the set of all reduced polynomials
$f\in\mathbb{F}_q[x_1,\ldots,x_m]$ satisfying $\deg(f)\leq\ell$.
The generalized Reed--Muller code of order $\ell$ is defined by
\begin{equation}\notag
\mathrm{GRM}_q(\ell,m)
=
\left\{
\bigl(f(\mathbf{x})\bigr)_{\mathbf{x}\in\mathbb{F}_q^m}:
f\in\mathcal{P}_q(\ell,m)
\right\}.
\end{equation}
The punctured generalized Reed--Muller code of order
$\ell$, denoted by $\mathrm{PGRM}_q(\ell,m)$, is obtained from
$\mathrm{GRM}_q(\ell,m)$ by deleting the coordinate indexed by
the zero vector $\mathbf{0}\in\mathbb{F}_q^m$. That is, 
\begin{equation}\notag
\mathrm{PGRM}_q(\ell,m)
=
\left\{
\bigl(f(\mathbf{x})\bigr)_{\mathbf{x}\in\mathbb{F}_q^m \setminus 
\{\mathbf{0}\}}:
f\in\mathcal{P}_q(\ell,m)
\right\}.
\end{equation}
By the standard characterization \cite{DelsarteGoethalsMacWilliams1970,DingLiXia2018},  the code  $\mathrm{PGRM}_q(\ell,m)$ can be identified with the cyclic code of length $n=q^m-1$ whose defining set with respect to $\beta$ is
\begin{equation}\label{eqdef}
T\left(\mathrm{PGRM}_q(\ell,m)\right)
=
\left\{
a:
1\leq a\leq n-1,\ 
\wtq(a)<m(q-1)-\ell
\right\}.
\end{equation}

\section{Auxiliary results}\label{sec:auxiliary}
In this section, we introduce several auxiliary lemmas required to prove our main result. The first one, due to Ax \cite{Ax1964}, relates the degree of a polynomial over a finite field to the number of its zeros. A detailed proof can also be found in \cite[Theorem 5.2]{Hou2018}. 
\begin{lemma}\textnormal{\cite{Ax1964}}\label{thmax}
Let $f\in\F_q[x_1,\ldots,x_m]$ be a polynomial with  $\deg(f)>0$, and let $N_f$ denote the total number of zeros of $f$, i.e., \begin{equation*}
N_f=\left|\{\mathbf{x}\in\F_q^m:f(\mathbf{x})=0\}\right|.
\end{equation*}
Then $q^{\lceil m/  \deg(f)\rceil-1}$ divides $N_f$. Equivalently, if $b$ is the largest integer such that $b<\frac{m}{\deg(f)}$, then $q^b$ divides $N_f$.
\end{lemma}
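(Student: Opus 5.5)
Since Lemma~\ref{thmax} is a classical result quoted from \cite{Ax1964} (with a full proof in \cite[Theorem 5.2]{Hou2018}), in the paper I would simply cite it. If I had to supply a proof, I would follow Ax's original $p$-adic argument: express $N_f$ through additive characters, expand the characters in Gauss sums, and bound the $q$-adic valuation of every term using Stickelberger's theorem.

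\emph{Setup.} Write $q=p^a$ and $f=\sum_{j=1}^r c_j\mathbf{x}^{\mathbf{e}_j}$ with $c_j\neq 0$ and $|\mathbf{e}_j|\le d=\deg f$. Let $\psi$ be a nontrivial additive character of $\F_q$, with values in $\mathbb{Z}_p[\zeta_p]$. Then
\[
qN_f=\sum_{y\in\F_q}\sum_{\mathbf{x}\in\F_q^m}\psi\bigl(yf(\mathbf{x})\bigr)
=\sum_{y,\mathbf{x}}\prod_{j=1}^r\psi\bigl(c_jy\mathbf{x}^{\mathbf{e}_j}\bigr).
\]
Let $\omega$ be the Teichm\"uller character, extended by $\omega(0)=0$. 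Each factor expands as $\psi(z)=\sum_{k=0}^{q-1}\gamma_k\,\omega(z)^k$, where $\gamma_0=1$, $\gamma_{q-1}=-q/(q-1)$, and $\gamma_k=-g(\omega^{-k})/(q-1)$ for $1\le k\le q-2$, with $g$ the Gauss sum. Using the digit $k=q-1$ instead of $0$ lets the expansion handle the value $z=0$ correctly. Substituting and interchanging sums gives
\[
qN_f=\sum_{\mathbf{k}\in\{0,\ldots,q-1\}^r}\Bigl(\prod_j\gamma_{k_j}\omega(c_j)^{k_j}\Bigr)\,S(\mathbf{k}),
\]
where $S(\mathbf{k})$ is a product of one-variable sums $\sum_{z\in\F_q}\omega(z)^{n}$. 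Each such sum is $0$ unless $n=0$ (giving $q$) or $n$ is a positive multiple of $q-1$ (giving $q-1$). Hence $\mathbf{k}$ contributes only when $\sum_jk_j\equiv0$ and $\sum_jk_je_{j,i}\equiv0\pmod{q-1}$ for each $i$, with the value $0$ allowed only in the $q$ case. In addition, each variable $x_i$ for which $\sum_jk_je_{j,i}=0$, and also $y$ if $\sum_j k_j=0$, contributes an explicit extra factor $q$.

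\emph{Key estimate.} By Stickelberger, $v_q(\gamma_k)=s(k)/(a(p-1))$, where $s(k)$ is the sum of the $p$-adic digits of $k$. So each surviving term has $q$-adic valuation at least
\[
\frac{1}{a(p-1)}\sum_j s(k_j)+\#\{\text{variables with exponent sum }0\}.
\]
I would show this is at least $\lceil m/d\rceil$, which after dividing by $q$ gives $q^{\lceil m/d\rceil-1}\mid N_f$. The argument applies the subadditivity of $s$ and its invariance under multiplication by $p$ modulo $q-1$ to the congruences $\sum_jk_je_{j,i}\equiv0$. For each variable $x_i$ with nonzero exponent sum, this forces $\sum_j e_{j,i}\,s(k_j)\ge a(p-1)$. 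Summing over $i$ and using $\sum_ie_{j,i}\le d$ yields
\[
d\sum_j s(k_j)\ge a(p-1)\cdot\#\{i:\text{exponent sum of }x_i\text{ nonzero}\}.
\]
The remaining variables each contribute a factor $q$, and the $y$-sum contributes either $q$ or a condition $\sum_j k_j\equiv0$ that forces $\sum_j s(k_j)\ge a(p-1)$, which is where $\deg f>0$ is used. Combining these bounds and taking the ceiling, since valuations of the rational integer $qN_f$ are integral, should give the claimed total.

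The main obstacle is the bookkeeping in this last step. Specifically, I must ensure that the extra factor coming from the $y$-sum raises the bound from $m/d$ to $\lceil m/d\rceil$ in every case, and must treat the edge cases $k_j\in\{0,q-1\}$ carefully, since there the Gauss-sum valuation formula must be replaced by the explicit values of $\gamma_0$ and $\gamma_{q-1}$. If the direct $p$-adic route becomes too messy, I would fall back on Wan's elementary reduction as presented in \cite{Hou2018}.
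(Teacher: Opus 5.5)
Citing Ax (or Hou, Theorem 5.2) is exactly what the paper does; it gives no proof of its own. Your self-contained sketch, however, has a genuine gap when $q=p^a$ with $a\ge 2$. This case matters here, because Theorem 1 is stated for every prime power.

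Put $I=\{i:\sum_j k_je_{j,i}\neq 0\}$. For each surviving $\mathbf{k}$, your estimates give only
\[
v_q(\text{term})\ \ge\ \max\bigl(|I|/d,1\bigr)+m-|I|\ \ge\ m/d .
\]
Your final step, ``take the ceiling because $qN_f$ is a rational integer,'' is then invalid. The valuation $v_q(n)=v_p(n)/a$ lies in $\frac1a\mathbb{Z}$, not in $\mathbb{Z}$. From $v_q(qN_f)\ge m/d$ you only get $p^{\lceil am/d\rceil-a}\mid N_f$. For example, with $q=4$, $m=3$, $d=2$ this yields $2\mid N_f$, whereas the lemma asserts $4\mid N_f$. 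The loss comes from aggregating Stickelberger's valuation into the single quantity $\sum_j s(k_j)$: this discards the integrality that the auxiliary variable $y$ was introduced to provide. Using the $y$-congruence only to get $\sum_j s(k_j)\ge a(p-1)$ is too weak. (Your worry about the edge cases is moot: $v_q(\gamma_k)=s(k)/(a(p-1))$ holds for all $0\le k\le q-1$.)

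The missing idea is to split Stickelberger's formula over the $a$ Frobenius twists. For $0\le k\le q-1$,
\[
v_p(\gamma_k)=\frac{s(k)}{p-1}=\sum_{\ell=0}^{a-1}\theta_\ell(k),
\]
where $\theta_\ell(k)=\{p^\ell k/(q-1)\}$ for $k\le q-2$ and $\theta_\ell(q-1)=1$.

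Fix $\ell$ and put $\tau_\ell=\sum_j\theta_\ell(k_j)$.
\begin{itemize}
\item Multiplying the variable congruences by $p^\ell$ shows that $\sum_j e_{j,i}\theta_\ell(k_j)$ is an integer for each $i\in I$. It is positive, since $\theta_\ell(k)>0$ whenever $k>0$. Hence $d\,\tau_\ell\ge |I|$.
\item The $y$-congruence $\sum_j k_j\equiv 0\pmod{q-1}$, with $\mathbf{k}\neq\mathbf{0}$, shows that $\tau_\ell$ itself is a positive integer.
\end{itemize}
So $\tau_\ell\ge\max(\lceil |I|/d\rceil,1)$ for each $\ell$ separately. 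Summing over $\ell$ gives
\[
v_p(\text{term})\ \ge\ a\Bigl(\max\bigl(\lceil |I|/d\rceil,1\bigr)+m-|I|\Bigr)\ \ge\ a\lceil m/d\rceil .
\]
The last inequality holds because the bracket is non-increasing in $|I|$ and equals $\lceil m/d\rceil$ at $|I|=m$. This gives $q^{\lceil m/d\rceil}\mid qN_f$, which is the lemma. For $q$ prime your aggregated version already suffices, since then $a=1$.
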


The following  result concerning the weights of codewords in punctured generalized Reed--Muller codes is a direct consequence of Lemma \ref{thmax}.
\begin{corollary}\label{lemma2}
Let $\ell$ be an integer satisfying $1\leq \ell<m(q-1)$. Then every codeword
$\mathbf{c}\in\PGRM_q(\ell,m)$ satisfies
\begin{equation}\label{contra}
\wt(\mathbf{c})
\equiv 0
\pmod{q^{\left\lfloor (m-1)/\ell\right\rfloor}}
\quad\text{or}\quad
\wt(\mathbf{c})
\equiv -1
\pmod{q^{\left\lfloor (m-1)/\ell\right\rfloor}}.
\end{equation}
\end{corollary}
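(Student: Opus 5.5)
We apply Lemma~\ref{thmax} to the reduced polynomial $f\in\mathcal{P}_q(\ell,m)$ that defines the codeword. Write $\mathbf{c}=\bigl(f(\mathbf{x})\bigr)_{\mathbf{x}\in\F_q^m\setminus\{\mathbf{0}\}}$ and set $b=\lfloor (m-1)/\ell\rfloor$.

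First we dispose of the degenerate case $\deg(f)\le 0$, i.e.\ $f$ is a constant. If $f=0$, then $\wt(\mathbf{c})=0$. If $f$ is a nonzero constant, then $\wt(\mathbf{c})=q^m-1\equiv -1 \pmod{q^b}$, because $b\le m-1<m$. Both outcomes satisfy \eqref{contra}.

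Now suppose $\deg(f)=e$ with $1\le e\le \ell$. Let $N_f$ be the number of zeros of $f$ in all of $\F_q^m$. The codeword deletes the coordinate at $\mathbf{0}$, so its weight is
\[
\wt(\mathbf{c}) = (q^m-1) - \bigl(N_f - [f(\mathbf{0})=0]\bigr),
\]
where $[\cdot]$ is the indicator. If $f(\mathbf{0})=0$, this gives $\wt(\mathbf{c})=q^m-N_f$. If $f(\mathbf{0})\neq 0$, it gives $\wt(\mathbf{c})=q^m-1-N_f$.

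Lemma~\ref{thmax} says $q^{b'}\mid N_f$, where $b'$ is the largest integer with $b'<m/e$. We check that $b'\ge b$. Since $b\le (m-1)/\ell$, we have $b\ell\le m-1<m$, so $b<m/\ell\le m/e$. Hence $q^b\mid N_f$. Also $q^b\mid q^m$ because $b<m$. The two cases therefore give $\wt(\mathbf{c})\equiv 0$ or $\wt(\mathbf{c})\equiv -1\pmod{q^b}$, which is \eqref{contra}.

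The only step needing any care is the inequality $b=\lfloor (m-1)/\ell\rfloor<m/\deg(f)$, which is why the floor involves $m-1$ rather than $m$. Once that is checked, the rest is bookkeeping about the zero vector and the constant polynomials.
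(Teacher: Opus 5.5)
Your proof is correct and follows essentially the same route as the paper: apply Ax's theorem to $f$, check that $\lfloor (m-1)/\ell\rfloor<m/\deg(f)$, and account for the deleted coordinate at $\mathbf{0}$. The paper first shows that $q^{\lfloor (m-1)/\ell\rfloor}$ divides the weight $q^m-N_f$ of the unpunctured $\GRM_q(\ell,m)$ codeword and then subtracts $0$ or $1$; you do the same bookkeeping directly on the punctured codeword.
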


\begin{proof}
Let $\mathbf{c}\in\PGRM_q(\ell,m)$, and choose a codeword
$\widetilde{\mathbf{c}}\in\GRM_q(\ell,m)$ whose puncturing yields
$\mathbf{c}$. Depending on whether the punctured coordinate of
$\widetilde{\mathbf{c}}$ is zero or nonzero, we have
\begin{equation*}
\wt(\widetilde{\mathbf{c}})=\wt(\mathbf{c})
\quad\text{or}\quad
\wt(\widetilde{\mathbf{c}})=\wt(\mathbf{c})+1,
\end{equation*}
respectively.
Therefore, to establish \eqref{contra}, it suffices to show that
$q^{\left\lfloor(m-1)/\ell\right\rfloor}\mid
\wt(\widetilde{\mathbf{c}})$.

By definition, there exists a polynomial
$f\in\F_q[x_1,\ldots,x_m]$ with $\deg(f)\leq\ell$ such that
\begin{equation*}
\widetilde{\mathbf{c}}
=
\bigl(f(\mathbf{x})\bigr)_{\mathbf{x}\in\mathbb{F}_q^m}.
\end{equation*}
It follows that
\begin{equation}\label{wt}
\wt(\widetilde{\mathbf{c}})=q^m-N_f,
\end{equation}
where
$N_f=|\{\mathbf{x}\in\F_q^m:f(\mathbf{x})=0\}|$.

If $\deg(f)\leq 0$, then
$\wt(\widetilde{\mathbf{c}})=q^m$ or
$\wt(\widetilde{\mathbf{c}})=0$. In either case,  
$q^{\left\lfloor(m-1)/\ell\right\rfloor}\mid
\wt(\widetilde{\mathbf{c}})$.

Now suppose that $1\leq\deg(f)\leq\ell$. By Lemma~\ref{thmax}, we obtain 
\begin{equation}\label{lema2e1}
q^{\lceil m/\deg(f)\rceil-1}\mid N_f.
\end{equation}
Since $\deg(f)\leq\ell$, we have
\begin{equation}\label{lema2e2}
\left\lceil\frac{m}{\deg(f)}\right\rceil-1
\geq
\left\lceil\frac{m}{\ell}\right\rceil-1
=
\left\lfloor\frac{m-1}{\ell}\right\rfloor.
\end{equation}
It follows from \eqref{wt}, \eqref{lema2e1}, and \eqref{lema2e2} that
$q^{\left\lfloor(m-1)/\ell\right\rfloor}\mid
\wt(\widetilde{\mathbf{c}})$. Thus, this divisibility holds in all cases,
and hence \eqref{contra} follows.
\end{proof}

We shall also use the following result of Kasami and Lin \cite{KasamiLin1972} to determine
the exact minimum distances of the binary BCH codes considered in this paper.

\begin{lemma}\textnormal{\cite{KasamiLin1972}}\label{kl}
Let $m$, $i$, and $j$  be integers with 
$
1\leq i\leq m-j-2$ and 
$0\leq j\leq m-2i.
$ Let $\delta=2^{m-1-j}-2^{m-1-j-i}-1.
$ 
Then
$\mathcal{C}_{(2,m,\delta)}$ satisfies
\begin{equation*}
d\bigl(\mathcal{C}_{(2,m,\delta)}\bigr)=\delta.
\end{equation*}
\end{lemma}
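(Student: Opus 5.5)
The lower bound $d\bigl(\mathcal{C}_{(2,m,\delta)}\bigr)\ge\delta$ is the BCH bound, so the plan is to exhibit one codeword of weight exactly $\delta$. I will do this through the extended code. Identify coordinates of the extended code with $\F_{2^m}$. A subset $S\subseteq\F_{2^m}$ with $0\in S$ gives a codeword of the extended code of $\mathcal{C}_{(2,m,\delta)}$ precisely when
\begin{equation*}
\sum_{x\in S}x^a=0\quad\text{for } a=0 \text{ and for every } a\in\bigcup_{b=1}^{\delta-1}C_b .
\end{equation*}
This is the usual Mattson--Solomon description of the zeros $\beta^a$, together with the overall parity check. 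Deleting the coordinate $0$ then gives a codeword of $\mathcal{C}_{(2,m,\delta)}$ of weight $|S|-1$. So it suffices to find such an $S$ with $|S|=2^{m-1-j}-2^{m-1-j-i}$ and $0\in S$.

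The natural candidate is $S=(V\setminus W)\cup\{0\}$, where $W\subset V$ are $\F_2$-subspaces of $\F_{2^m}$ with $\dim V=m-1-j$ and $\dim W=m-1-j-i$. Its power sums are $\sum_{V}x^a-\sum_{W}x^a$ for $a\ge1$, since $0^a=0$. The parity check also holds, because $|S|$ is even when $m-1-j-i\ge1$, which follows from $i\le m-j-2$.

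The key tool is the classical fact that, for an $\F_2$-subspace $U$ of dimension $k$,
\begin{equation*}
\sum_{x\in U}x^a=0\quad\text{whenever } 1\le \wt_2(a)<k .
\end{equation*}
I would prove it by expanding $x^a=\prod_{\ell}x^{2^{\ell}}$ over the binary digits of $a$, writing $x$ in a basis of $U$, and noting that each resulting monomial in the $k$ coordinates misses some coordinate, so it sums to zero. The Frobenius map permutes $C_b$ and preserves $\wt_2$, so it is enough to check $a=\cl(a)=b$ with $1\le b<\delta$. If $\wt_2(b)<m-1-j-i$, both sums vanish. Since $b<2^{m-1-j}$ forces $b$ to have at most $m-1-j$ binary digits, the only problematic exponents are coset leaders $b<\delta$ with $m-1-j-i\le\wt_2(b)\le m-1-j$.

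The main obstacle is handling these intermediate-weight coset leaders. A coset leader below $2^{m-1-j}-2^{m-1-j-i}-1$ with such a large weight must have a rigid shape: its $m$-digit cyclic word contains a run of at least $j+1$ zeros and has few zeros in total. I would first classify these leaders, using $0\le j\le m-2i$ and $i\le m-j-2$ to restrict the zero pattern. I would then choose $V$ and $W$ compatibly with that pattern, taking $W$ adapted to a chain of subfields or to a linearized-polynomial kernel, so that $\sum_W x^b=\sum_V x^b$ for exactly those $b$.

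If that matching cannot be made explicit, my fallback is a multivariate formulation. Take
\begin{equation*}
f=(x_1+1)\cdots(x_{j+1}+1)\bigl(x_{j+2}\cdots x_{j+1+i}+1\bigr),
\end{equation*}
which has weight $2^{m-1-j}(1-2^{-i})$ and satisfies $f(\mathbf{0})=1$. I would then verify directly that the univariate (trace) representation of $f$, under a suitable choice of basis, avoids every exponent in the defining set $\bigcup_{b<\delta}C_b$.
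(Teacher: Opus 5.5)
The paper does not prove this lemma; it quotes it from Kasami and Lin. Taken on its own terms, your proposal fails at its core: for $i\ge 2$, the sets you propose are never supports of codewords.

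First a slip. $|(V\setminus W)\cup\{0\}|=2^{m-1-j}-2^{m-1-j-i}+1=\delta+2$ is odd, so the parity check fails, and deleting $0$ would leave weight $\delta+1$. The correctly sized set is $S=V\setminus(W+x_0)$ with $x_0\in V\setminus W$, which is exactly the support of your fallback $f$.

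The real obstruction is the exponent $b=2^k-1$ with $k=m-1-j-i\ge 1$. It is a coset leader, and $\delta-b=2^{k+i}-2^{k+1}$, so for $i\ge 2$ it lies in the defining set. Since $1\le k<\dim V$, you have $\sum_{x\in V}x^b=0$. On the other side, expand $(w+x_0)^{2^k-1}=\prod_{l=0}^{k-1}\bigl(w^{2^l}+x_0^{2^l}\bigr)$. Your vanishing fact kills every term of $2$-weight less than $k$, and $|W|\equiv 0$ kills the constant term, which gives
\[
\sum_{x\in W+x_0}x^{b}=\sum_{w\in W}w^{2^k-1}=\det\bigl(w_r^{2^{l}}\bigr)_{1\le r\le k,\ 0\le l\le k-1}.
\]
This is the Moore determinant of a basis $w_1,\dots,w_k$ of $W$, which is nonzero because the $w_r$ are $\F_2$-independent. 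Hence $\sum_{x\in S}x^b\neq 0$ for every choice of $W\subset V$ and $x_0$, and likewise for your untranslated set.

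So the matching $\sum_W x^b=\sum_V x^b$ you hope to arrange is impossible, however $W$ is adapted to subfields or linearized kernels. The fallback fails in every basis for the same reason. The construction works only for $i=1$: there $S=W$, and you are just in Peterson's case $\delta=2^{m-2-j}-1$. This agrees with the paper: for $(i,j)=(u,0)$, Step~2 puts $\C_{(2,m,2^{m-1}-2^{m-1-u}-1)}$ inside $\PGRM_2(3,m)$, whereas your $f$ has degree $u+1\ge 4$.

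What is missing is a different shape of minimum-weight word. This is where the hypothesis $j\le m-2i$ enters, and your argument never really uses it. On an $(m-j)$-dimensional subspace $U$, a quadratic form $Q$ of rank $2i$ exists exactly when $2i\le m-j$. If $Q$ also vanishes on its radical and has Arf invariant $0$, then $|\{x\in U: Q(x)=1\}|=2^{m-1-j}-2^{m-1-j-i}=\delta+1$.

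Supports of this type, not ``flat minus flat'', are the natural candidates. For instance, when $j=0$ every $\mathrm{Tr}(ax^{2^k+1})$ with $i\le k\le m-i$ lies in the code, and one needs a rank-$2i$ form in their span.

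A complete proof has to choose $U$ and $Q$ explicitly and then verify the power-sum conditions for all coset leaders below $\delta$. Affine invariance of the extended code can be used to move a support point to $0$. That verification is the substance of the cited result, and it is absent from your proposal.
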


\section{Proof of the main result}\label{mainresult}

With the  preparations provided in the previous sections, we are now ready to prove Theorem \ref{mainth}.

\begin{proof}[Proof of Theorem \ref{mainth}]
The proof is divided into four steps: we first show that $d_B(\mathcal{C}_{(q,m,\delta)})=\delta$, i.e., $\delta$ is a coset leader. Next, we show that $\mathcal{C}_{(q,m,\delta)} \subseteq
\mathrm{PGRM}_q(3,m)$. Then by combining $d(\mathcal{C}_{(q,m,\delta)}) \ge \delta$ and Corollary \ref{lemma2}, we prove \eqref{1:gapb} of Theorem \ref{mainth}. Now Finally we deal with the case that $q=2$. 

{\bf Step 1.} We first prove that $d_B(\mathcal{C}_{(q,m,\delta)})=\delta$, that is, $\delta$ is a coset leader.  
By Proposition~\ref{properties}, it suffices to prove that
\begin{equation}\label{suf}
[\delta]_qQ^k>[\delta]_q
\quad\text{for all }1\leq k\leq m-1.
\end{equation}

A direct calculation
gives the $q$-adic vector of $\delta$ as 
\begin{equation}\label{deltaq}
    [\delta]_q = \left( q-2, (q-1)^{(u-1)}, q-2, (q-1)^{(b)}, q-2, (q-1)^{(s)} \right),
\end{equation}
where $b = m - u - s - 2$ and $x^{(r)}$ denotes a block consisting of $r$ consecutive copies of
$x$.
The assumptions $u = \lfloor m/4 \rfloor$ and $m \ge 13$ imply that
\begin{equation*}
    m - 1 \le 4u + 2 < 6u.
\end{equation*}
It follows that
\begin{equation*}
    t = \left\lfloor \frac{m-1}{3} \right\rfloor \le 2u - 1.
\end{equation*}
Since $s < t$, we obtain
\begin{equation}\label{hao}
    s \le t - 1 \le 2u - 2.
\end{equation}
Consequently,
\begin{equation}\label{bgequ}
\begin{split}
    b &= m - u - s - 2 \\
    &\ge m - 3u\\ 
    &\ge u.
    \end{split}
\end{equation} 
Therefore, the first $u+1$ digits of $[\delta]_qQ^u$ and $[\delta]_q$
are
\begin{equation*}
    \left(q-2,(q-1)^{(u)}\right)\quad \hbox{and}\quad 
\left(q-2,(q-1)^{(u-1)},q-2\right), 
\end{equation*}
respectively. It follows that $[\delta]_q Q^{u} > [\delta]_q$. 
Similarly, since $s \ge u$, the first $u+1$ digits of
$[\delta]_q Q^{u+b+1}$ are also  $\left( q-2, (q-1)^{(u)} \right)$. Hence,
$
    [\delta]_q Q^{u+b+1} > [\delta]_q.
$

Moreover, for each integer  $k$ satisfying $1\leq k\leq u-1$,
$u+1\leq k\leq u+b$, or $u+b+2\leq k\leq m-1$, 
the first digit of $[\delta]_q Q^k$ is $q-1$, whereas the first digit of
$[\delta]_q$ is $q-2$.
This implies that
\begin{equation*}
    [\delta]_q Q^k > [\delta]_q
\end{equation*}
for all such $k$. Together with the cases $k=u$ and $k=u+b+1$ considered above, this
proves \eqref{suf} and hence establishes
$d_B(\mathcal{C}_{(q,m,\delta)})=\delta$.

{\bf Step 2.} Now we prove that
\begin{equation}\label{contain}
\mathcal{C}_{(q,m,\delta)}
\subseteq
\mathrm{PGRM}_q(3,m).
\end{equation}
By \eqref{eqdef},  the defining set of $\mathrm{PGRM}_q(3,m)$  is 
\begin{equation}\notag
T(\mathrm{PGRM}_q(3,m)) = \{ a \in \{ 1, \dots, n-1\} : \mathrm{wt}_q(a) < m(q-1)-3 \}.
\end{equation}
Let $a \in T(\mathrm{PGRM}_q(3,m))$ and let $[a]_q = (a_{m-1}, a_{m-2}, \dots, a_0)$ be its $q$-adic vector. Then we have
\begin{equation}\label{inequ}
    \sum_{i=0}^{m-1} (q - 1 - a_i) = m(q-1) - \mathrm{wt}_q(a) > 3.
\end{equation}
We now establish
\begin{equation*}
   [\cl(a)]_q < [\delta]_q.
\end{equation*}
by considering the following two cases.

First, suppose that $a_i \le q - 3$ for some integer $i$ with $0\leq i \leq  m-1$. Note that $a_i$ is precisely the first digit of $[a]_q Q^{m-1-i}$, while the first digit of $[\delta]_q$ is $q-2$. Thus, we have
\begin{equation*}
    [\cl(a)]_q \le [a]_q Q^{m-1-i} < [\delta]_q.
\end{equation*}

Now suppose that $a_i \ge q - 2$ for all integers $i$ with $0\leq i\leq m-1 $. Let
\begin{equation*}
k = |\{i : 0\leq i\leq m-1, a_i = q-2\}|
\end{equation*}
denote the number of digits of $[a]_q$  equal to $q-2$. It follows from \eqref{inequ} that $k \ge 4$.

If there exist two cyclically consecutive entries equal to $q-2$ in the $q$-adic vector $[a]_q$ of $a$, then there exists some integer $j$ with $0\leq j \leq   m-1$ such that the first two digits of $[a]_q Q^j$ are both equal to $q-2$. Observe that the first two digits of $[\delta]_q$ are $q-2$ and $q-1$, respectively.
It follows  that
\begin{equation*}
    [\mathrm{cl}(a)]_q \le [a]_q Q^j < [\delta]_q.
\end{equation*}

Otherwise, after replacing $[a]_q$ by $[a]_qQ^j$ for a suitable integer
$j$, if necessary, we may assume that the $q$-adic vector of $a$ is of the form
\begin{equation*}
    [a]_q = \left( q-2, (q-1)^{(g_1)}, q-2, (q-1)^{(g_2)}, \dots, q-2, (q-1)^{(g_k)} \right),
\end{equation*}
where $g_i \ge 1$ for all $1 \le i \le k$, and $g_1 = \min \{g_i : 1 \le i \le k\}$. It is clear that
\begin{equation*}
    \sum_{i=1}^k g_i = m - k.
\end{equation*}
Therefore, we have
\begin{equation*}
    g_1  \le \left\lfloor \frac{m-k}{k} \right\rfloor \le \left\lfloor \frac{m-4}{4} \right\rfloor = u - 1.
\end{equation*}

If $g_1<u-1$, then the first $g_1+2$ digits of $[a]_q$ and 
$[\delta]_q$ are 
\begin{equation*}
    \left(q-2,(q-1)^{(g_1)},q-2\right)\quad \hbox{and}\quad 
\left(q-2,(q-1)^{(g_1+1)}\right),
\end{equation*}
respectively. It follows that
\begin{equation} \label{hhh}  
[\mathrm{cl}(a)]_q \le [a]_q < [\delta]_q.
\end{equation}

It remains to consider the case $g_1=u-1$. Since $g_1 = \min \{g_i : 1 \le i \le k\}$, we have $g_i \ge u - 1$ for all $1 \le i \le k$. Therefore,
\begin{align*}
    g_2 &= m - k - \sum_{\substack{1\leq i\leq k\\ i\neq2}}g_i\\
       & \le m - k - (k-1)(u-1) \\
       &\le m - 1 - 3u,
\end{align*}
where the last inequality follows from $k \ge 4$. On the other hand, it follows from \eqref{hao} that
\begin{equation}\label{inequality}
    b - (m - 1 - 3u) = 2u - 1 - s \ge 1.
\end{equation}
Combining the above inequalities, we obtain $g_2 < b$. Then, by comparing the first $g_1 + g_2 + 3$ digits of $[a]_q$ and $[\delta]_q$, we obtain \eqref{hhh} again.

We have thus shown that $[\mathrm{cl}(a)]_q < [\delta]_q$ in all cases. Recalling \eqref{pre1} and $\mathrm{cl}(\delta) = \delta$, it follows that $a \in T(\mathcal{C}_{(q,m,\delta)})$. Therefore,
\begin{equation}\notag
T\bigl(\mathrm{PGRM}_q(3,m)\bigr)
\subseteq
T\bigl(\mathcal{C}_{(q,m,\delta)}\bigr).
\end{equation}
With \eqref{pre2}, this implies that \eqref{contain} holds.

{\bf Step 3.} We now proceed to prove   
$d(\mathcal{C}_{(q,m,\delta)}) \geq \delta+q^{s}$.
Suppose that $w = \mathrm{wt}(\mathbf{c})$ for some nonzero codeword $\mathbf{c} \in \mathcal{C}_{(q,m,\delta)}$. Then we first have $w \ge \delta$.
By \eqref{contain}, $\mathbf c$ is also a 
codeword of $\PGRM_q(3,m)$. Therefore,  Corollary~\ref{lemma2}
gives 
\begin{equation}\notag
    w \bmod q^t = 0 \quad \text{or} \quad w \bmod q^t = q^t - 1,
\end{equation}
where $t=\lfloor(m-1)/3\rfloor$.
 If $w \le \delta + q^s - 1$, then since $w\geq \delta$,  $t \le m - 1 - u$ and $s < t \le m - 1$, we have
\begin{equation*}
    q^t - q^s - 1 \le w \bmod q^t \le q^t - 2.
\end{equation*}
This contradicts \eqref{contra}. Therefore, we must have $w \ge \delta + q^s$, which implies that $d(\mathcal{C}_{(q,m,\delta)}) \geq \delta+q^{s}$. This proves \eqref{1:gapb} of Theorem \ref{mainth}. 

{\bf Step 4.} Finally, we prove that the lower bound on the minimum distance is attained
when $q=2$, i.e.,
\begin{equation}\label{maineq3}
d(\mathcal{C}_{(2,m,\delta)}) = \delta+{2}^s,
\end{equation}
where $\delta=2^{m-1}-2^{m-1-u}-2^s-1$. 

Since $m\geq 13$ and $u=\left\lfloor m/4\right\rfloor$, we have
$2\leq u\leq m-2$ and $m-2u\geq0$. Notice that $\delta+2^s=2^{m-1}-2^{m-1-u}-1$. 
By applying Lemma \ref{kl} with $(i,j)=(u,0)$, we obtain 
\begin{equation*}
d(\mathcal{C}_{(2,m,\delta+2^s)}) = \delta+2^s.
\end{equation*}
By \eqref{pre1},
\begin{equation*} T(\mathcal{C}_{(2,m,\delta)})  \subseteq T(\mathcal{C}_{(2,m,\delta+2^s)}), 
\end{equation*}
and hence  by \eqref{pre2}, 
\begin{equation*}  \mathcal{C}_{(2,m,\delta+2^s)} \subseteq \mathcal{C}_{(2,m,\delta)}.
\end{equation*}
It follows that 
\begin{equation}\notag
d(\mathcal{C}_{(2,m,\delta)})
\leq d(\mathcal{C}_{(2,m,\delta+2^s)})
= \delta+2^s. 
\end{equation}
Combining this with the lower bound already established yields \eqref{maineq3}.
\end{proof}

\section{Conclusion and Remarks}\label{conclusion}
In this paper, we have constructed an infinite family of primitive
narrow-sense BCH codes for which the gap between the minimum distance
and the Bose distance is unbounded. This family disproves Charpin's
conjecture and shows that the proximity between $d$ and $d_B$ observed
in previously known examples does not reflect a general structural
property of BCH codes. 

For the subfamily corresponding to $s=t-1$ in
Theorem~\ref{mainth}, we have
\begin{equation*}
d-d_B\geq q^{\lfloor(m-1)/3\rfloor-1}.
\end{equation*}
For each fixed prime power $q$, this lower bound is
$\Theta(n^{1/3})$, where $n=q^m-1$ is the code length. It is natural to ask whether there
exist infinite families for which the gap grows faster and, more
generally, to determine the largest possible value of $d-d_B$ among
primitive narrow-sense BCH codes of a given length.

\section*{Acknowledgment}
The first author gratefully acknowledges financial support from Professor Cunsheng Ding. The authors acknowledge the use of  ChatGPT for exploratory discussions during the early phase of the research.  All mathematical results were developed, proved, and independently verified by the authors, who take full responsibility for the content and integrity of this work.

\section{Appendix} \label{append}

In this Appendix, we prove that the dimension of
$\mathcal{C}_{(q,m,\delta)}$ is given by \eqref{dimeq}. 

\begin{proof} Since $\cl(\delta)=\delta$,
\eqref{pre1} gives 
\begin{equation}\label{sets}
\{0,1,\ldots,n-1\}\setminus T(\mathcal{C}_{(q,m,\delta)})
=
\{0\}\cup
\{a:1\leq a\leq n-1,\ \cl(a)\geq\delta\}.
\end{equation}
Therefore, by \eqref{dim}, it suffices to count the integers in the set
\begin{equation*}
\mathcal{A}
=
\{a:1\leq a\leq n-1,\ \cl(a)\geq\delta\}.
\end{equation*}

Let $a\in\mathcal{A}$. Recalling the $q$-adic vector of $\delta$ given in
\eqref{deltaq}, we see that every digit of $[a]_q$ belongs to
$\{q-2,q-1\}$, and $[a]_q$ cannot contain two cyclically consecutive
digits equal to $q-2$.  Indeed, if either condition failed, there would exist an
integer $i$ with $1\leq i\leq m$ such that
$[\cl(a)]_q\leq [a]_qQ^i<[\delta]_q$, contradicting $\cl(a)\geq \delta$.   

We first claim that $[a]_q$ contains at most three digits equal to $q-2$.
Suppose, to the contrary, that it contains $v\geq4$ such digits. Then
\begin{equation*}
[\cl(a)]_q
=
\bigl(
q-2,(q-1)^{(h_1)},q-2,(q-1)^{(h_2)},
\ldots,q-2,(q-1)^{(h_v)}
\bigr),
\end{equation*}
where $h_1=\min\{h_i:1\leq i\leq v\}$,
$\sum_{i=1}^{v}h_i=m-v$, and $h_i\geq1$ for every integer  $i$ with $1\leq i\leq v$. Therefore,
\begin{equation*}
h_1
\leq
\left\lfloor\frac{m-v}{v}\right\rfloor
\leq
\left\lfloor\frac{m-4}{4}\right\rfloor
=u-1.
\end{equation*}
The condition  $\cl(a)\geq\delta$ implies $h_1\geq u-1$. Consequently, 
$h_1=u-1$ and  $h_i\geq u-1$ for every integer $i$ with  $1\leq i\leq v$. 
It follows that
\begin{equation*}
\begin{aligned}
h_2
&=m-v-\sum_{\substack{1\leq i\leq v\\i\neq2}}h_i\\
&\leq m-v-(v-1)(u-1)\\
&\leq m-3u-1\\
&<b,
\end{aligned}
\end{equation*}
where the last two inequalities follow from $v\geq4$ and
\eqref{inequality}, respectively.  Together with $h_1=u-1$, this implies that  $[\cl(a)]_q<[\delta]_q$, contradicting
$\cl(a)\geq\delta$. Thus, the claim follows.

Since $a\leq n-1=q^m-2$, the vector $[a]_q$ cannot consist entirely of
digits equal to $q-1$. It therefore contains exactly one, two, or three
digits equal to $q-2$. For $j\in\{1,2,3\}$, let $N_j$ denote the number
of integers $a\in\mathcal{A}$ such that  $[a]_q$ contains exactly $j$
digits equal to $q-2$. Then we distinguish the following cases.

\emph{Case 1:} The vector $[a]_q$ has exactly one digit equal to $q-2$.
 There are $m$ possible positions for this digit, and each resulting vector satisfies
$[\cl(a)]_q>[\delta]_q$ by \eqref{deltaq}. Hence, we have
\begin{equation*}
    N_1=m.
\end{equation*}

\emph{Case 2:} The vector $[a]_q$ contains exactly two digits equal to
$q-2$. By \eqref{deltaq}, the condition $\cl(a)\geq\delta$ is
equivalent to requiring that, when the entries of $[a]_q$ are read
cyclically, at least $u-1$ digits lie between the two digits equal to
$q-2$ in each direction.

There are $m$ choices for the first digit equal to $q-2$. Once it is
chosen, the other digit equal to $q-2$ cannot lie among the $u-1$
entries immediately preceding it or the $u-1$ entries immediately
following it. Hence, there are $m-1-2(u-1)=m-2u+1$ choices for the
other digit. 
Since either digit equal to $q-2$ may have been chosen first, the
number of integers in this case is
\begin{equation*}
N_2=\frac{m(m-2u+1)}{2}.
\end{equation*}
 
\emph{Case 3:}  The vector $[a]_q$ contains exactly three digits equal to
$q-2$. Fix one of these digits and read the entries cyclically from it.
Let  $h_1,h_2,h_3$
denote the numbers of digits between successive digits equal to
$q-2$.   For a fixed first digit, the choices of the other two digits equal to
$q-2$ are in one-to-one correspondence with the ordered triples $(h_1,h_2,h_3)$  satisfying $h_1+h_2+h_3=m-3$.  

By \eqref{deltaq},
 the condition $\cl(a)\geq\delta$ is equivalent to
the following two conditions:
\begin{enumerate}
\item[(i)]  $h_i\geq u-1$ for all $1\leq i\leq3$.
\item[(ii)] If $h_1=u-1$, then $h_2\geq b$; if $h_2=u-1$, then
$h_3\geq b$; and if $h_3=u-1$, then $h_1\geq b$.
\end{enumerate}

We first consider the triples satisfying $h_i\geq u$ for all
$1\leq i\leq3$. The substitution $x_i=h_i-u$ gives a bijection between
these triples and the nonnegative integer solutions of
$x_1+x_2+x_3=m-3-3u$. Hence,  the number of
such triples is
$
\binom{m-3u-1}{2}.
$

It remains to consider the triples for which some $h_i$ equals $u-1$.
Condition (ii), together with $b\geq u$ established in  \eqref{bgequ}, implies that exactly one
of the $h_i$ equals $u-1$. There are three choices for this index.
Suppose, for example, that $h_1=u-1$. Then $h_2\geq b$ and  $h_3\geq u$. Since $h_1+h_2+h_3=m-3$, it follows that  
\begin{equation*}
    h_3=m-3-(u-1)-h_2,
\end{equation*}
and 
\begin{equation*}
b\leq h_2\leq m-3-(u-1)-u.
\end{equation*}
Therefore, the number of choices for
$(h_2,h_3)$ is
\begin{equation*}
m-2u-2-b+1=s-u+1.
\end{equation*}
Thus, this case gives $3(s-u+1)$ triples.

Combining these two counts, for each fixed first digit equal to $q-2$,
the number of choices for the other two digits is
$\binom{m-3u-1}{2}+3(s-u+1)$. There are $m$ choices for the fixed
digit, and any of the three digits equal to $q-2$ may serve as this
digit. Hence, the number of integers in this case is
\begin{equation*}
N_3
=
\frac{m}{3}
\left(
\binom{m-3u-1}{2}+3(s-u+1)
\right).
\end{equation*}

Combining the three cases, we obtain
\begin{equation*}
\begin{aligned}
|\mathcal{A}|
&=N_1+N_2+N_3\\
&=
m+\frac{m(m-2u+1)}{2}
+\frac{m}{3}
\left(
\binom{m-3u-1}{2}+3(s-u+1)
\right).
\end{aligned}
\end{equation*}
Together with \eqref{dim} and \eqref{sets}, this yields \eqref{dimeq}. This completes the proof of \eqref{dimeq}.
\end{proof}

\end{document}